\documentclass{article} 
\usepackage{iclr2021_conference,times}


\usepackage{amsmath,amsthm,amsfonts,amssymb}
\usepackage{bm}
\usepackage{graphicx}
\usepackage{sidecap}
\usepackage{mathrsfs}
\usepackage{multirow, multicol}
\usepackage{caption}
\usepackage{wrapfig}
\usepackage{booktabs} 
\usepackage{xcolor}
\usepackage{subcaption}

\newtheorem{theorem}{Theorem}[section]


\newcommand{\MI}{\mathcal{I}}


\newcommand{\bbE}{\mathbb{E}}

\newcommand{\calF}{\mathcal{F}}

\newcommand{\calL}{\mathcal{L}}

\newcommand{\calN}{\mathcal{N}}

\newcommand{\calP}{\mathcal{P}}

\newcommand{\calS}{\mathcal{S}}

\newcommand{\calX}{\mathcal{X}}











\def\vmu{{\bm{\mu}}}

\def\vc{{\bm{c}}}

\def\vp{{\bm{p}}}

\def\vs{{\bm{s}}}

\def\vu{{\bm{u}}}

\def\vx{{\bm{x}}}
\def\vy{{\bm{y}}}
\def\vz{{\bm{z}}}



\def\mP{{\bm{P}}}

\def\mS{{\bm{S}}}

\def\mU{{\bm{U}}}

\DeclareMathAlphabet{\mathsfit}{\encodingdefault}{\sfdefault}{m}{sl}
\SetMathAlphabet{\mathsfit}{bold}{\encodingdefault}{\sfdefault}{bx}{n}






\usepackage{hyperref}
\usepackage{url}
\newcommand{\ourModel}{ IDE-VC }

\title{Improving Zero-Shot Voice Style Transfer via  Disentangled Representation Learning}

\author{Siyang Yuan$^{1 *}$, Pengyu Cheng$^{1}$\thanks{Equal contribution.} , Ruiyi Zhang$^{1}$, Weituo Hao$^{1}$, Zhe Gan$^{2}$ and Lawrence Carin$^{1}$  \\
$^1$Duke University, Durham, North Carolina, USA \\ $^2$Microsoft,  Redmond, Washington, USA\\
\texttt{\{siyang.yuan,pengyu.cheng\}@duke.edu} }
%

\iclrfinalcopy 
\begin{document}

\maketitle
\vspace{-2mm}
\begin{abstract}
\vspace{-2mm}
Voice style transfer, also called voice conversion, seeks to modify one speaker's voice to generate speech as if it came from another (target) speaker. Previous works have made progress on voice conversion with parallel training data and pre-known speakers. However, zero-shot voice style transfer, which learns from non-parallel data and generates voices for previously unseen speakers, remains a challenging problem. 
We propose a novel zero-shot voice transfer method via disentangled representation learning.
The proposed method first encodes speaker-related \textit{style} and voice \textit{content} of each input voice into separated low-dimensional embedding spaces, and then transfers to a new voice by combining the source content embedding and target style embedding through a decoder.
%
With information-theoretic guidance, the style and content embedding spaces are representative and 
(ideally) independent of each other.
%
On real-world VCTK datasets, our method outperforms other baselines and
obtains state-of-the-art results in terms of transfer accuracy and voice naturalness for voice style transfer experiments under both many-to-many and zero-shot setups.
%
\end{abstract}

\vspace{-2mm}
\section{Introduction}
\vspace{-2mm}

Style transfer, which automatically converts a data instance into a target style, while preserving its content information, has attracted considerable attention in various machine learning domains, including computer vision~\citep{gatys2016image,luan2017deep,huang2017arbitrary}, video processing~\citep{huang2017real,chen2017coherent}, and natural language processing~\citep{shen2017style,yang2018unsupervised,lample2018multipleattribute,cheng2020improving}.  In speech processing, style transfer was earlier recognized as voice conversion (VC)~\citep{muda2010voice}, which converts one speaker's utterance, as if it was from another speaker but with the same semantic meaning. Voice style transfer (VST) has received long-term research interest, due to its potential for  applications in security~\citep{sisman2018adaptive}, medicine~\citep{nakamura2006speaking}, entertainment~\citep{villavicencio2010applying} and education~\citep{mohammadi2017overview}, among others.

Although widely investigated, VST remains challenging when applied to more general application scenarios.
Most of the traditional VST methods require parallel training data, \textit{i.e.}, paired voices from two speakers uttering the same sentence. This constraint limits the application of such models in the real world, where data are often not pair-wise available. Among the few existing models that address non-parallel data~\citep{hsu2016voice,lee2006map,godoy2011voice}, most methods cannot handle many-to-many transfer~\citep{saito2018non,kaneko2018cyclegan,kameoka2018stargan}, which prevents them from converting multiple source voices to multiple target speaker styles. Even among the few non-parallel many-to-many transfer models, to the best of our knowledge, only two models~\citep{qian2019autovc,chou2019one} allow zero-shot transfer, \textit{i.e.}, conversion from/to newly-coming speakers (unseen during training) without re-training the model.

The only two zero-shot VST models (AUTOVC~\citep{qian2019autovc} and AdaIN-VC~\citep{chou2019one}) share a common weakness. Both methods construct encoder-decoder frameworks, which extract the style and the content information into style and content embeddings, and generate a voice sample by combining a style embedding and a content embedding through the decoder. With the combination of the source content embedding and the target style embedding, the models generate the transferred voice, based only on source and target voice samples. AUTOVC~\citep{qian2019autovc} uses a GE2E~\citep{wan2018generalized} pre-trained style encoder to ensure rich speaker-related information in style embeddings. However, AUTOVC has no regularizer to guarantee that the content encoder does not encode any style information.  AdaIN-VC~\citep{chou2019one} applies instance normalization~\citep{ulyanov2016instance} to the feature map of content representations, which helps to eliminate the style information from content embeddings. However, AdaIN-VC fails to prevent content information from being revealed in the style embeddings. Both methods cannot assure  
that the style and content embeddings are disentangled without information revealed from each other.


With information-theoretic guidance, we propose 
a disentangled-representation-learning method to enhance the encoder-decoder zero-shot VST framework, for both style and content information preservation. We call the proposed method \textbf{I}nformation-theoretic \textbf{D}isentangled \textbf{E}mbedding for \textbf{V}oice \textbf{C}onversion~(IDE-VC). Our model successfully induces the style and content of voices into independent representation spaces by minimizing the mutual information between style and content embeddings. We also derive two new multi-group mutual information lower bounds, to further improve the representativeness of the latent embeddings. 
Experiments demonstrate that our method outperforms previous works under both many-to-many and zero-shot transfer setups on two objective metrics and two subjective metrics. 

\vspace{-2.mm}
\section{Background}\label{sec:preliminary}
\vspace{-2.mm}
 In information theory, mutual information (MI) is a crucial concept that measures the dependence between two random variables. Mathematically, the  MI between two variables $\vx$ and $\vy$ is 
\begin{equation}\label{eq:mi-definition}
    \MI(\vx; \vy): = \bbE_{p(\vx, \vy)} \Big[\log \frac{p(\vx, \vy)}{p(\vx) p(\vy)} \Big],
\end{equation}
where $p(\vx)$ and $p(\vy)$ are marginal distributions of $\vx$ and $\vy$, and $p(\vx, \vy)$ is the joint distribution. 
Recently, MI has attracted considerable interest in machine learning as a criterion to minimize or maximize the dependence between different parts of a model~\citep{chen2016infogan,alemi2016deep,hjelm2018learning,velivckovic2018deep,song2019learning}.
 However, the calculation of exact MI values is challenging in practice, since the closed form of joint distribution $p(\vx,\vy)$ in equation~\eqref{eq:mi-definition} is generally unknown. 
 To solve this problem, several MI estimators have been proposed. 
 For MI maximization tasks, Nguyen, Wainwright and Jordan (NWJ)~\citep{nguyen2010estimating} propose a lower bound by representing \eqref{eq:mi-definition} as an $f$-divergence~\citep{moon2014ensemble}: 
\begin{equation}\label{eq:BA_lowerbound} 
\MI_{\text{NWJ}} := \bbE_{p(\vx, \vy)}[f(\vx,\vy)] - e^{-1}\bbE_{p(\vx) p(\vy)} [e^{f(\vx,\vy)}],
\end{equation}
with a score function ${f(\vx,\vy)}$. Another widely-used sample-based MI lower bound is InfoNCE~\citep{oord2018representation}, which is derived with Noise Contrastive Estimation (NCE)~\citep{gutmann2010noise}. With sample pairs $\{(\vx_i, \vy_i)\}_{i=1}^{N}$ drawn from the joint distribution $p(\vx,\vy)$, the InfoNCE lower bound is defined as
\begin{equation}
       \MI_{\text{NCE}}:= \bbE\Big[\frac{1}{N} \sum_{i = 1}^N \log \frac{e^{f(\vx_i,\vy_i)}}{\frac{1}{N}\sum_{j=1}^N e^{f(\vx_i, \vy_j)}} \Big]. \label{eq:NCE}
\end{equation}
%
%
For MI minimization tasks, \citet{cheng2020club} proposed a contrastively learned upper bound that requires the conditional distribution $p(\vx|\vy)$:
\begin{equation}
   \MI(\vx;\vy) \leq \bbE\Big[ \frac{1}{N}\sum_{i=1}^N \Big[\log p(\vx_i|\vy_i) - \frac{1}{N}\sum_{j=1}^N \log p(\vx_j | \vy_i) \Big]\Big]. \label{eq:mi-upper-club}
\end{equation}
where the MI is bounded by the log-ratio of conditional distribution $p(\vx| \vy)$ between positive and negative sample pairs. In the following, we derive our information-theoretic disentangled representation learning framework for voice style transfer based on the MI estimators described above.

\vspace{-2mm}
\section{Proposed Model}\label{sec:method}
\vspace{-2mm}
We assume access to $N$ audio (voice) recordings from $M$ speakers, where speaker~$u$ has $N_u$ voice samples $\calX_u =  \{\vx_{ui}\}_{i=1}^{N_u}$.
The proposed approach encodes each voice input $\vx \in \calX = \cup_{u=1}^M \calX_u$  into a speaker-related (style) embedding $\vs= E_s(\vx)$ and a content-related embedding $\vc=E_c(\vx)$, using respectively a style encoder $E_s(\cdot)$ and a content encoder $E_c(\cdot)$. To transfer a source $\vx_{ui}$ from speaker~$u$ to the target style of the voice of speaker~$v$, $\vx_{vj}$, we  combine the content embedding $\vc_{ui} = E_c(\vx_{ui})$  and the style embedding $\vs_{vj} = E_s(\vx_{vj})$ to generate the transferred voice $\hat{\vx}_{u \to v, i}=D(\vs_{vj}, \vc_{ui})$ with a decoder $D(\vs, \vc)$.  
To implement this two-step transfer process,
we introduce a novel mutual information (MI)-based learning objective, that induces the style embedding $\vs$ and content embedding $\vc$ into independent representation spaces (\textit{i.e.}, ideally, $\vs$ contains rich style information of $\vx$ with no content information, and {\em vice versa}). 
 In the following, we first describe our MI-based training objective in Section \ref{sec:objective}, and then discuss the practical estimation of the objective in Sections \ref{sec:lower_bounc_est} and \ref{sec:upper_bound}. 

\subsection{MI-based Disentangling Objective}\label{sec:objective}
\vspace{-1mm}
From an information-theoretic perspective, to learn representative latent embedding $(\vs, \vc)$, it is desirable to maximize the mutual information between the embedding pair $(\vs,\vc)$ and the input $\vx$. Meanwhile, the style embedding $\vs$ and the content $\vc$ are desired to be independent, so that we can control the style transfer process with different style and content attributes. Therefore, we minimize the mutual information $\MI (\vs; \vc)$ to disentangle the style embedding and content embedding spaces. Consequently, our overall disentangled-representation-learning objective seeks to minimize
\begin{equation}
    \calL = \MI(\vs; \vc) - \MI (\vx; \vs, \vc) = \MI(\vs; \vc) - \MI(\vx; \vc | \vs) - \MI(\vx; \vs).
\end{equation}
As discussed in Locatello \textit{et al.}~\citep{locatello2019challenging}, without inductive bias for supervision, the learned representation can be meaningless. To address this problem, we use the speaker identity $\vu$ as a variable with values $\{1,\dots,M\}$ to learn representative style embedding $\vs$ for speaker-related attributes. Noting that the process from speaker~$u$ to his/her voice $\vx_{ui}$ to the style embedding $\vs_{ui}$ (as $\vu \to \vx \to \vs$) is a Markov Chain, we conclude $\MI(\vs; \vx) \geq \MI(\vs; \vu)$ based on the MI data-processing inequality~\citep{cover2012elements} (as stated in the Supplementary Material). Therefore, we replace $\MI(\vs; \vx)$ in $\calL$ with  $\MI(\vs; \vu)$ and minimize an upper bound instead:
\begin{equation}
    \bar{\calL} = \MI(\vs; \vc) - \MI (\vx; \vc| \vs)  -  \MI( \vu; \vs) \geq \MI(\vs; \vc) - \MI (\vx; \vc | \vs)  -  \MI(\vx; \vs ),
\end{equation} 
In practice, calculating the MI is challenging, as we typically only have access to samples, and lack the required distributions~\citep{chen2016infogan}.
To solve this problem, below we provide several MI  estimates to the objective terms $\MI(\vs ; \vc)$, $\MI(\vx; \vc| \vs)$ and $\MI( \vu; \vs)$.


\vspace{-1mm}
\subsection{MI Lower Bound Estimation}\label{sec:lower_bounc_est}
\vspace{-1mm}
To maximize  $\MI(\vu;\vs)$, we derive the following multi-group MI lower bound (Theorem~\ref{thm:MI_u_s}) based on the NWJ bound developed in Nguyen \textit{et al.}~\citep{nguyen2010estimating}. The detailed proof is provided in the Supplementary Material.
%
Let $\vmu^{(-ui)}_{v} = \vmu_{v}$ represent the mean of all style embeddings in group $\calX_{v}$, constituting the style centroid of speaker~$v$; $\vmu^{(-ui)}_{u}$ is the mean of all style embeddings in group $\calX_{u}$ except data point $\vx_{ui}$, representing a leave-$\vx_{ui}$-out style centroid of speaker~$u$. Intuitively, we minimize $\Vert \vs_{ui} - \vmu^{(-ui)}_{u} \Vert$ to encourage the style embedding of voice $\vx_{ui}$ to be more similar to the style centroid of speaker~$u$, while maximizing $\Vert \vs_{ui} - \vmu^{(-ui)}_{v} \Vert$ to enlarge the margin between $\vs_{ui}$ and the other speakers' style centroids $\vmu_{v}$. We denote the right-hand side of \eqref{eq:MI_s_u} as $\hat{\MI}_1$.
\begin{theorem}\label{thm:MI_u_s}
Let $\vmu^{(-ui)}_{v} = \frac{1}{N_v} \sum_{k=1}^{N_v} \vs_{vk}$ if $u \neq v$; and $\vmu^{(-ui)}_{u} = \frac{1}{N_u-1} \sum_{j\neq i} \vs_{uj}$. Then, 
\begin{equation}
    \MI(\vu;\vs) \geq \bbE\Big[\frac{1}{N} \sum_{u=1}^M \sum_{i=1}^{N_u} \Big[ - \Vert \vs_{ui} - \vmu^{(-ui)}_{u} \Vert^2  -  \frac{{e}^{-1}}{N} \sum_{v=1}^{M} N_{v} \exp\{- \Vert \vs_{ui} - \vmu^{(-ui)}_{v} \Vert^2\} \Big]\Big]. \label{eq:MI_s_u}
\end{equation}
\end{theorem}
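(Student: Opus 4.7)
The plan is to apply the NWJ lower bound from equation~\eqref{eq:BA_lowerbound} to $\MI(\vu;\vs)$ with a Gaussian-shaped score function built from speaker centroids, then replace population quantities by leave-one-out sample estimates. Concretely, I would start from
\begin{equation*}
\MI(\vu;\vs) \geq \bbE_{p(\vu,\vs)}[f(\vu,\vs)] - e^{-1}\bbE_{p(\vu)p(\vs)}[e^{f(\vu,\vs)}],
\end{equation*}
which holds for any score function $f$, and pick $f(\vu = v, \vs) = -\Vert \vs - \vmu_v\Vert^2$, where $\vmu_v := \bbE[\vs\mid \vu=v]$ is the population style centroid of speaker~$v$. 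This choice is motivated by the optimal NWJ score being (up to constants) the log density ratio, which for isotropic Gaussian conditionals $p(\vs\mid\vu)$ reduces to a negative squared distance to the class mean.

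Next I would approximate each expectation by a Monte Carlo average over the $N$ observed voices. Sampling from $p(\vu,\vs)$ is uniform over the $N$ training pairs, giving
\begin{equation*}
\bbE_{p(\vu,\vs)}[f(\vu,\vs)] \approx -\frac{1}{N}\sum_{u=1}^M\sum_{i=1}^{N_u}\Vert \vs_{ui}-\vmu_u\Vert^2.
\end{equation*}
For the product $p(\vu)p(\vs)$, I sample $\vs$ uniformly from the $N$ embeddings and independently sample $\vu=v$ with empirical probability $N_v/N$, so that
\begin{equation*}
\bbE_{p(\vu)p(\vs)}[e^{f(\vu,\vs)}] \approx \frac{1}{N^{2}}\sum_{u=1}^M\sum_{i=1}^{N_u}\sum_{v=1}^{M}N_v\,\exp\{-\Vert \vs_{ui}-\vmu_v\Vert^2\}.
\end{equation*}
Plugging these approximations into the NWJ bound and multiplying by $e^{-1}$ where appropriate produces exactly the right-hand side of \eqref{eq:MI_s_u}, modulo replacement of the population $\vmu_v$ by its sample estimate.

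The main obstacle, and really the only subtle step, is handling the unknown centroids $\vmu_v$ without spoiling the inequality. The remedy is the leave-one-out construction in the statement: for the cross-speaker terms $v \neq u$ the ordinary sample mean $\vmu_v^{(-ui)} = \tfrac{1}{N_v}\sum_k \vs_{vk}$ is already independent of $\vs_{ui}$, while for $v=u$ one must exclude $\vs_{ui}$ and use $\vmu_u^{(-ui)} = \tfrac{1}{N_u-1}\sum_{j\neq i}\vs_{uj}$ so that the estimated centroid is still an unbiased, independent estimator of $\vmu_u$. This independence is what lets me regard the resulting data-dependent score function $\tilde f$ as a legitimate choice in the NWJ inequality on a per-realization basis, and then take the outer expectation $\bbE[\cdot]$ over the dataset on both sides. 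I would conclude by noting that since the NWJ inequality holds pointwise for every realization of the leave-one-out centroids, it is preserved under the outer expectation, yielding exactly \eqref{eq:MI_s_u}.
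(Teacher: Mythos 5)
Your proposal is correct and takes essentially the same route as the paper's proof: apply the NWJ bound with the score $f(\vu=v,\vs)=-\Vert\vs-\vmu_v^{(-ui)}\Vert^2$, estimate the joint and product-of-marginals expectations empirically with weights $N_v/N$, and average over all indices $ui$. Your conditioning argument for the leave-one-out centroids is simply a more explicit rendering of the paper's unbiased-estimation step, so the two arguments coincide in substance.
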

To maximize $\MI(\vx; \vc | \vs)$, we derive  a conditional mutual information lower bound below:
\begin{theorem}\label{thm:conditional-nce}
 Assume that given $\vs = \vs_u$, samples $\{(\vx_{ui}, \vc_{ui})\}_{i=1}^{N_u}$ are observed. With a variational distribution $q_\phi(\vx| \vs,\vc)$, we have  $\MI(\vx; \vc | \vs) \geq \bbE[\hat{\MI}]$, where
\begin{equation}
\hat{\MI} = \frac{1}{N}\sum_{u=1}^M \sum_{i=1}^{N_u} \Big[ \log q_\phi(\vx_{ui}| \vc_{ui}, \vs_u) - \log\Big(\frac{1}{N_u} \sum_{j=1}^{N_u} q_\phi(\vx_{uj}| \vc_{ui}, \vs_u)\Big) \Big].
\end{equation}
\end{theorem}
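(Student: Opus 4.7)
The plan is to derive the stated bound as a weighted average of the InfoNCE lower bound~\eqref{eq:NCE}, applied conditionally to each speaker $u$ and with the two variables relabeled so that $\vc$ plays the role of the ``anchor'' and $\vx$ the role of the ``candidate''. The critic in~\eqref{eq:NCE} is an arbitrary positive function, which is exactly what allows us to take it to be the variational decoder $q_\phi$.

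Concretely, I would first fix a speaker $u$ and condition on $\vs=\vs_u$, under which the $N_u$ pairs $\{(\vx_{ui},\vc_{ui})\}_{i=1}^{N_u}$ are i.i.d.\ draws from $p(\vx,\vc\mid\vs_u)$. Because MI is symmetric, it suffices to bound $\MI(\vc;\vx\mid\vs_u)$ from below. I would then invoke~\eqref{eq:NCE} with its first argument identified with our $\vc$ and its second with our $\vx$, and with score $f(\vc,\vx):=\log q_\phi(\vx\mid\vc,\vs_u)$, so that the denominator sum $\sum_j e^{f(\vc_i,\vx_j)}$ becomes $\sum_j q_\phi(\vx_j\mid\vc_i,\vs_u)$ with the anchor $\vc_i$ held fixed. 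This yields the per-speaker estimate
\begin{equation*}
\MI(\vx;\vc\mid\vs=\vs_u)\;\geq\;\bbE\!\left[\frac{1}{N_u}\sum_{i=1}^{N_u}\log\frac{q_\phi(\vx_{ui}\mid\vc_{ui},\vs_u)}{\tfrac{1}{N_u}\sum_{j=1}^{N_u}q_\phi(\vx_{uj}\mid\vc_{ui},\vs_u)}\right].
\end{equation*}
Finally I would assemble the global bound via the tower property $\MI(\vx;\vc\mid\vs)=\sum_{u=1}^M P(\vs=\vs_u)\,\MI(\vx;\vc\mid\vs=\vs_u)$, with empirical weights $P(\vs=\vs_u)=N_u/N$; the factor $N_u/N$ multiplied against the inner $1/N_u$ collapses to a uniform $1/N$ prefactor, reproducing $\bbE[\hat{\MI}]$ exactly.

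The step I expect to require the most care is the role-swap: although $\MI(\vx;\vc)=\MI(\vc;\vx)$ is immediate, \eqref{eq:NCE} is asymmetric in the two coordinates of its critic (``negatives'' are obtained by varying the second index while holding the first fixed), so I must be explicit that after renaming NCE's $(\vx,\vy)\mapsto(\vc,\vx)$ the positive term $e^{f(\vx_i,\vy_i)}$ becomes $q_\phi(\vx_i\mid\vc_i,\vs_u)$ and the denominator becomes precisely $\tfrac{1}{N_u}\sum_j q_\phi(\vx_j\mid\vc_i,\vs_u)$, with $\vc_i$ the anchor and the $\vx_j$'s the negatives. A smaller bookkeeping point is to check that the speaker-frequency weights $N_u/N$ and the per-speaker $1/N_u$ combine cleanly to a uniform $1/N$; this relies on $\vs_u$ being (approximately) a deterministic function of the speaker identity $\vu$, as implicit in the theorem's setup.
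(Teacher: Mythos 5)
Your proposal is correct and takes essentially the same route as the paper's proof: apply the InfoNCE bound conditionally for each speaker with the critic chosen as $f=\log q_\phi(\vx\mid\vc,\vs_u)$, then average over speakers (expectation over $\vs$, with per-speaker weights $N_u/N$ collapsing the $1/N_u$ factors into the uniform $1/N$ prefactor). Your explicit bookkeeping of the anchor/negative role-swap addresses a step the paper's proof glosses over (it writes the denominator $\sum_j e^{f(\vx_{uj},\vc_{ui})}$ without comment), but the substance of the argument is identical.
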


Based on the criterion for $\vs$ in equation~\eqref{eq:MI_s_u}, a well-learned style encoder $E_s$ pulls all style embeddings $\vs_{ui}$ from speaker~$u$ together. Suppose $\vs_{u}$ is representative of the style embeddings of set $\calX_{u}$. If  we parameterize the distribution $q_\phi(\vx|\vs,\vc) \propto \exp(-\Vert \vx - D(\vs,\vc)\Vert^2)$ with decoder $D(\vs,\vc)$, then based on Theorem~\ref{thm:conditional-nce}, we can estimate the lower bound of $\MI(\vx;\vc|\vs)$ with the following objective:
\begin{equation}
\hat{\MI}_{2}:= \frac{1}{N} \sum_{u=1}^M \sum_{i=1}^{N_u} \Big[ - \Vert\vx_{ui}-  D(\vc_{ui}, \vs_u)\Vert^2 - \log\Big(\frac{1}{N_u} \sum_{j=1}^{N_u} \exp\{-\Vert\vx_{uj}- D(\vc_{ui}, \vs_u)\Vert^2\} \Big)\Big]. \nonumber
\end{equation}
When maximizing $\hat{\MI}_2$, for speaker~$u$ with his/her given voice style $\vs_u$, we encourage the content embedding $\vc_{ui}$ to well reconstruct the original voice $\vx_{ui}$, with small $\Vert \vx_{ui} - D(\vc_{ui},\vs_{u} )\Vert$. Additionally, the distance $\Vert \vx_{uj} -  D(\vc_{ui},\vs_{u} )\Vert$ is minimized, ensuring $\vc_{ui}$ does not contain information to reconstruct other voices $\vx_{uj}$ from speaker~$u$. With $\hat{\MI}_2$, the correlation between $\vx_{ui}$ and $\vc_{ui}$ is amplified, which improves $\vc_{ui}$ in preserving the content information.
%


\begin{figure}[t]
    \centering
    \includegraphics[width=0.95\textwidth]{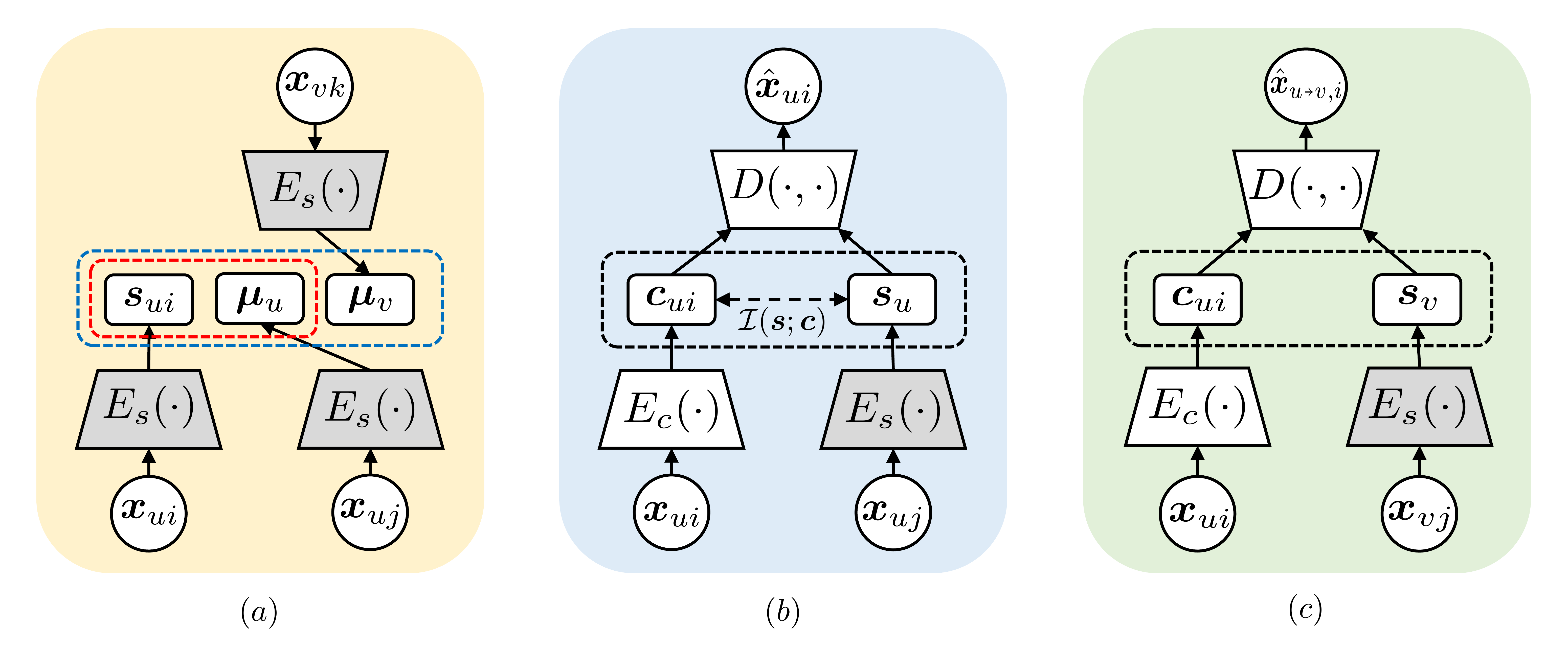}
    \caption{Training and transfer processes. (a) Training style encoder $E_s$ with objective $\hat{\MI}_1$: All voice samples are encoded into style embedding space. For style embedding $\vs_{ui}$ of $\vx_{ui}$, we minimize its distance with speaker~$u$'s style centroid $\vmu_{u}$, and maximize its distance to other speaker style centroids $\vmu_{v}$. (b) Training for content encoder $E_c$ and decoder $D$ as objectives $\hat{\MI}_2, \hat{\MI}_3$: We encode content $\vc_{ui}$ from voice $\vx_{ui}$ from speaker~$u$. The style of speaker~$u$ is encoded from another speaker~$u$'s voice $\vx_{uj}$. The dependency of style and content embedding is minimized with $\hat{\MI}_3$. With $\vc_{ui}$ and $\vs_{u}$, the decoder reconstructs the voice $\vx_{ui}$ as $\hat{\vx}_{ui} = D(\vs_{u}, \vc_{ui})$. Then $\hat{\MI}_2$ is calculated based on the original voice $\vc_{ui}$ and the reconstruction $\hat{\vc}_{ui}$. (c) Transfer process: for zero-shot voice style transfer, with $\vx_{ui}$ from speaker~$u$ and $\vx_{vj}$ from speaker~$v$, we encode content $\vc_{ui}$ and style $\vs_{v}$, and combine them together to generate a transferred voice $\hat{\vx}_{u\to v, i} = D(\vs_{v}, \vc_{ui})$.}
    \label{fig:framework}
\end{figure}

\vspace{-1mm}
\subsection{MI Upper Bound Estimation}\label{sec:upper_bound}
\vspace{-1mm}
The crucial part of our framework is disentangling the style and the content embedding spaces, which imposes (ideally) that the style embedding $\vs$ excludes any content information and {\em vice versa}. Therefore, the mutual information between $\vs$ and $\vc$ is expected to be minimized.
To estimate $\MI(\vs; \vc)$, we derive  a sample-based MI {\em upper} bound in Theorem~\ref{thm:upper-bound} base on ~\eqref{eq:mi-upper-club}.
\begin{theorem}\label{thm:upper-bound}
If $p(\vs|\vc)$ provides the conditional distribution between variables $\vs$ and $\vc$, then
\begin{equation}\label{eq:mi-upper-bound}
  \MI(\vs; \vc) \leq \bbE \Big[ \frac{1}{N} \sum_{u = 1}^M \sum_{i=1}^{N_u} \Big[\log p(\vs_{ui} | \vc_{ui}) - \frac{1}{N} \sum_{v=1}^{M} \sum_{j = 1}^{N_v} \log p(\vs_{ui}|\vc_{vj}) \Big]\Big].
\end{equation}
\end{theorem}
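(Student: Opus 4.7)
The plan is to apply the CLUB sample-based upper bound from equation \eqref{eq:mi-upper-club} directly, identifying $\vx \leftarrow \vs$ and $\vy \leftarrow \vc$, and then re-index the resulting double sum using the speaker-wise enumeration of samples. First I would note that the full training set of $N = \sum_{u=1}^M N_u$ joint draws $(\vs,\vc)$ is naturally flattened into a single list of $N$ samples via the nested index $(u,i)$ with $u \in \{1,\dots,M\}$ and $i \in \{1,\dots,N_u\}$. Under this flattening, the outer sum $\tfrac{1}{N}\sum_{k=1}^{N}$ in \eqref{eq:mi-upper-club} becomes $\tfrac{1}{N}\sum_{u,i}$, and the inner sum $\tfrac{1}{N}\sum_{\ell=1}^{N}$ becomes $\tfrac{1}{N}\sum_{v,j}$, so that the double sum ranges over all ordered quadruples $(u,i,v,j)$.

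Plugging $\vx \leftarrow \vs$ and $\vy \leftarrow \vc$ into \eqref{eq:mi-upper-club} directly produces the positive term $\log p(\vs_{ui}\mid\vc_{ui})$, matching \eqref{eq:mi-upper-bound}, but yields a negative term of the form $\log p(\vs_{vj}\mid\vc_{ui})$ rather than the desired $\log p(\vs_{ui}\mid\vc_{vj})$. The key observation is that the double sum
\[
\frac{1}{N^2}\sum_{u=1}^M\sum_{i=1}^{N_u}\sum_{v=1}^M\sum_{j=1}^{N_v}\log p(\vs_{vj}\mid \vc_{ui})
\]
is invariant under the dummy-index swap $(u,i)\leftrightarrow(v,j)$, since summation ranges over all such pairs symmetrically. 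Applying that swap rewrites the expression as $\tfrac{1}{N^2}\sum_{u,i,v,j}\log p(\vs_{ui}\mid\vc_{vj})$, which is exactly the inner expression in \eqref{eq:mi-upper-bound}.

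Combining these two steps gives the bound in the theorem statement. I would also briefly verify that the outer expectation is justified: the $N$ training draws are i.i.d.\ from the joint distribution $p(\vs,\vc)$ (obtained by pushing each $\vx_{ui}$ through the encoders $E_s,E_c$), so the sample-based CLUB bound applies, and the expectation in \eqref{eq:mi-upper-bound} is with respect to this joint sampling.

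The main obstacle is essentially just careful bookkeeping: CLUB as stated in \eqref{eq:mi-upper-club} ranges the \emph{first} argument of the log-conditional across the inner sum while fixing the conditioning, whereas \eqref{eq:mi-upper-bound} fixes the first argument and ranges the conditioning. Recognizing that this discrepancy is cosmetic, resolved by relabeling the symmetric double sum, is the only nontrivial bit; no new inequality beyond \eqref{eq:mi-upper-club} is needed.
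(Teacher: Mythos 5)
Your proposal is correct and takes essentially the same route as the paper: both rest entirely on the CLUB bound of Cheng et al., the paper invoking its population form $\MI(\vs;\vc)\le \bbE_{p(\vs,\vc)}[\log p(\vs|\vc)]-\bbE_{p(\vs)p(\vc)}[\log p(\vs|\vc)]$ and then writing the speaker-indexed sample estimate, while you start directly from the sample form \eqref{eq:mi-upper-club} with $\vx\leftarrow\vs$, $\vy\leftarrow\vc$ and repair the bookkeeping by the dummy-index swap, which is valid because the double sum ranges over all ordered index pairs. There is no substantive difference; your observation that the transposed negative term ($\log p(\vs_{vj}|\vc_{ui})$ versus $\log p(\vs_{ui}|\vc_{vj})$) is purely cosmetic is precisely the step the paper passes over silently when it writes its estimator.
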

The upper bound in \eqref{eq:mi-upper-bound} requires the ground-truth conditional distribution $p(\vs | \vc)$, whose closed form is unknown. Therefore, we use a probabilistic neural network $q_\theta (\vs| \vc)$ to approximate $p(\vs| \vc)$ by maximizing the log-likelihood $\calF(\theta) = \sum_{u=1}^M \sum_{i=1}^{N_u} \log q_\theta(\vs_{ui}| \vc_{ui})$.  With the learned $q_\theta(\vs|\vc)$,  the objective for minimizing $\MI(\vs;\vc)$ becomes:
\begin{equation}\label{thm:approx}
   \hat{\MI}_3 :=  \frac{1}{N} \sum_{u = 1}^M \sum_{i=1}^{N_u} \Big[\log q_\theta(\vs_{ui} | \vc_{ui}) - \frac{1}{N} \sum_{v=1}^{M} \sum_{j = 1}^{N_v} \log q_\theta(\vs_{ui}|\vc_{vj}) \Big].
\end{equation}
When weights of encoders $E_c, E_s$ are updated, the embedding spaces $\vs,\vc$ change, which leads to the changing of conditional distribution $p(\vs | \vc)$. Therefore, the neural approximation $q_\theta(\vs| \vc)$ must be updated again.
Consequently, during training, the encoders  $E_c, E_s$ and the approximation  $q_\theta(\vs| \vc)$ are updated iteratively. In the Supplementary Material, we further discuss that with a good approximation $q_\theta(\vs|\vc)$, $\hat{\MI}_3$ remains an MI upper bound.

\vspace{-1.mm}
\subsection{Encoder-Decoder Framework}
\vspace{-1.mm}
With the aforementioned MI estimates $\hat{\MI}_{1}$, $\hat{\MI}_{2}$, and $\hat{\MI}_3$, the final training loss of our method is 
\begin{equation}\label{eq:final_obj}
    \calL^* = [\hat{\MI}_{3} - \hat{\MI}_{1} - \hat{\MI}_{2}] - \beta \calF(\theta),
\end{equation}
where $\beta$ is a positive number re-weighting the two objective terms.  Term $\hat{\MI}_{3} - \hat{\MI}_{1} - \hat{\MI}_{2}$ is minimized {w.r.t} the parameters in encoders $E_c, E_s$ and decoder $D$; term $\calF(\theta)$ as the likelihood function of $q_\theta(\vs|\vc)$ is maximized {w.r.t} the parameter $\theta$. In practice, the two terms are updated iteratively with gradient descent (by fixing one and updating another). The training and transfer processes of our model are shown in Figure~\ref{fig:framework}. We name this MI-guided learning framework as  \textbf{I}nformation-theoretic \textbf{D}isentangled \textbf{E}mbedding for \textbf{V}oice \textbf{C}onversion~(IDE-VC).

\vspace{-2mm}
\section{Related Work}\label{sec:related_works}
\vspace{-2mm}
\textbf{Many-to-many Voice Conversion } 
Traditional voice style transfer methods mainly focus on one-to-one and many-to-one conversion tasks, which can only transfer voices into one target speaking style. This constraint limits the applicability of the methods. 
Recently, several many-to-many voice conversion methods have been proposed, to convert voices in broader application scenarios. 
StarGAN-VC~\citep{kameoka2018stargan} uses StarGAN~\citep{choi2018stargan} to enable many-to-many transfer, in which voices are fed into a unique generator conditioned on the target speaker identity. A discriminator is also used to evaluate generation quality and transfer accuracy. Blow~\citep{serra2019blow} is a flow-based generative model~\citep{kingma2018glow}, that maps voices from different speakers into the same latent space via normalizing flow~\citep{rezende2015variational}.
The conversion is accomplished by transforming the latent representation back to the observation space with the target speaker's identifier. Two other many-to-many conversion models, AUTOVC~\citep{qian2019autovc} and AdaIN-VC~\citep{chou2019one}, extend applications into zero-shot scenarios, \textit{i.e.}, conversion from/to a new speaker (unseen during training), based on only a few utterances.  Both AUTOVC and AdaIN-VC construct an encoder-decoder framework, which extracts the style and content of one speech sample into separate latent embeddings. Then when a new voice from an unseen speaker comes, both its style and content embeddings can be  extracted directly.
However, as discussed in the Introduction, both methods do not have explicit regularizers to reduce the correlation between style and content embeddings, which limits their performance.


\textbf{Disentangled Representation Learning } 
Disentangled representation learning (DRL) aims to encode data points into separate independent embedding subspaces, where different subspaces represent different data attributes. DRL methods can be classified into unsupervised and supervised approaches. Under unsupervised setups, \citet{burgess2018understanding}, \citet{higgins2016beta} and \citet{kim2018disentangling} use latent embeddings to reconstruct the original data while keeping each dimension of the embeddings independent with correlation regularizers. This has been challenged by \citet{locatello2019challenging}, in that each part of the learned embeddings may not be mapped to a meaningful data attribute. In contrast, supervised DRL methods effectively learn meaningful disentangled embedding parts by adding different supervision to different embedding components. Between the two embedding parts, the correlation is still required to be reduced to prevent the revealing of information to each other. The correlation-reducing methods mainly focus on adversarial training between embedding parts~\citep{hjelm2018learning,kim2018disentangling}, and mutual information minimization~\citep{chen2018isolating,cheng2020improving}.  By applying operations such as switching and combining, one can use disentangled representations to improve empirical performance on downstream tasks, \textit{e.g.} conditional generation~\citep{burgess2018understanding}, domain adaptation~\citep{gholami2020unsupervised}, and few-shot learning~\citep{higgins2017darla}.

\vspace{-2mm}
\section{Experiments}
\vspace{-2mm}
 We  evaluate our \ourModel on real-world voice a dataset under both many-to-many and zero-shot VST setups. The  selected dataset is CSTR Voice Cloning Toolkit (VCTK)~\citep{yamagishi2019cstr}, which includes 46 hours of audio from 109 speakers. 
Each speaker reads a different sets of utterances, and the training voices are provided in a non-parallel manner. The audios are downsampled at 16kHz. In the following, we first describe the evaluation metrics and the implementation details, and then analyze our model's performance relative to other baselines under many-to-many and zero-shot VST settings.

\vspace{-1.mm}
\subsection{Evaluation Metrics}
\vspace{-1.mm}
\textbf{Objective Metrics} We consider two objective metrics:  Speaker verification accuracy (\textit{Verification}) and the Mel-Cepstral Distance (\textit{Distance})~\citep{kubichek1993mel}. The speaker verification accuracy measures whether the transferred voice belongs to the target speaker. For fair comparison,  we used a third-party pre-trained speaker encoder Resemblyzer\footnote{https://github.com/resemble-ai/Resemblyzer} to classify the speaker identity from the transferred voices. Specifically, style centroids for speakers are learned with ground-truth voice samples. For a transferred voice, we encode it via the pre-trained speaker encoder and find the speaker with the closest style centroid as the identity prediction. 
For the \textit{Distance}, the vanilla Mel-Cepstral Distance (MCD) cannot handle the time alignment issue described in Section~\ref{sec:preliminary}. To make reasonable comparisons between the generation and ground truth, we apply the Dynamic Time Warping (DTW) algorithm~\citep{berndt1994using} to automatically align the time-evolving sequences before calculating MCD. This DTW-MCD distance measures the similarity of the transferred voice and the real voice from the target speaker.
Since the calculation of DTW-MCD requires parallel data, we select voices with the same content from the VCTK dataset as testing pairs.
Then we transfer one voice in the pair and calculate DTW-MCD with the other voice as reference. 


\textbf{Subjective Metrics  } Following Wester {\em et al.}~\citep{wester2016analysis}, we use the naturalness of the speech (\textit{Naturalness}), and the similarity of the transferred speech to target identity (\textit{Similarity}) as subjective metrics.
For \textit{Naturalness}, annotators are asked to rate the score from 1-5 for each transferred speech.
For the \textit{Similarity}, the annotators are presented with two audios (the converted speech and the corresponding reference), and are asked to rate the score from 1 to 4. For both scores, the higher the better. 
Following the setting in Blow~\citep{serra2019blow}, we report Similarity defined as a total percentage from the binary rating.  The evaluation of both subjective metrics is conducted on Amazon Mechanical Turk (MTurk)\footnote{https://www.mturk.com/}.   More details about evaluation metrics are provided in the Supplementary Material.

\vspace{-1.mm}
\subsection{Implementation Details}
\vspace{-1mm}
Following AUTOVC~\citep{qian2019autovc}, our model inputs are represented via mel-spectrogram. The number of mel-frequency bins is set as 80. When voices are generated, we adopt the WaveNet vocoder~\citep{oord2016wavenet} pre-trained on the VCTK corpus to invert the  spectrogram signal back to a waveform. 
The spectrogram is first upsampled with deconvolutional layers to match the sampling rate, and then a standard 40-layer WaveNet is applied to generate speech waveforms. 
Our model is implememted with Pytorch and takes 1 GPU day on an Nvidia Xp to train.

\textbf{Encoder Architecture  }
The speaker encoder consists of a 2-layer long short-term memory (LSTM) with cell size of 768, and a fully-connected layer with output dimension 256. 
The speaker encoder is initialized  with weights from a pretrained GE2E~\citep{wan2018generalized} encoder. 
The input of the content encoder is the concatenation of the mel-spectrogram signal and the corresponding speaker embedding. 
The content encoder consists of three convolutional layers with 512 channels, and two layers of a bidirectional LSTM with cell dimension 32. 
Following the setup in AUTOVC~\citep{qian2019autovc}, the forward and backward outputs of the bi-directional LSTM are downsampled by 16.

\textbf{Decoder Architecture   }
Following AUTOVC~\citep{qian2019autovc}, the initial decoder consists of a three-layer convolutional neural network (CNN) with 512 channels, three LSTM layers with cell dimension 1024, and another convolutional layer to project the output of the LSTM to dimension of 80.  
To enhance the quality of the spectrogram, following AUTOVC~\citep{qian2019autovc}, we use a post-network consisting of five convolutional layers with 512 channels for the first four layers, and 80 channels for the last layer. 
The output of the post-network can be viewed as a residual signal. 
The final conversion signal is computed by directly adding the output of the initial decoder and the post-network. The reconstruction loss is applied to both the output of the initial decoder and the final conversion signal.

\textbf{Approximation Network Architecture} 
As described in Section~\ref{sec:upper_bound}, minimizing the mutual information between style and content embeddings requires an auxiliary variational approximation $q_\theta(\vs| \vc)$. For implementation, we parameterize the variational distribution in the Gaussian distribution family $q_\theta(\vs| \vc) = \calN(\vmu_\theta(\vc), \bm{\sigma}_\theta^2(\vc)\cdot \bm{\mathrm{I}})$, where  mean $\vmu_\theta(\cdot)$ and  variance  $\bm{\sigma}^2_\theta(\cdot)$ are two-layer fully-connected networks with $\tanh(\cdot)$ as the activation function. With the Gaussian parameterization, the likelihoods in objective $\hat{\MI}_3$ can be calculated in closed form.

\vspace{-1.mm}
\subsection{Style Transfer Performance}\label{sec:many-to-many-result}
\vspace{-1.mm}
\begin{table}[t]
\caption{Many-to-many VST evaluation results. For all metrics except Distance, higher is better.}
\label{tab:voiceconv_nonparallel}
\centering
\setlength{\tabcolsep}{10pt}
\begin{tabular}{lcccc}
\toprule
Metric & \multicolumn{2}{c}{Objective} & \multicolumn{2}{c}{Subjective} \\
    \cmidrule(r){2-3} \cmidrule(r){4-5}
    & Distance   & Verification[\%]  &  Naturalness [1--5] & Similarity [\%] \\
\midrule
StarGAN   & 6.73 & 71.1  & 2.77 & 51.5      \\
AdaIN-VC  & 6.98 & 85.5 & 2.19  & 50.8      \\
AUTOVC  & 6.73 & 89.9 & 3.25 & 55.0\\
Blow        & 8.08    & -  & 2.11 & 10.8   \\
\ourModel (Ours) & \textbf{6.70} & \textbf{92.2} & \textbf{3.26} & \textbf{68.5}\\
\bottomrule
\end{tabular}

\end{table} 

For the many-to-many VST task, 
we randomly select 10\% of the sentences for validation and 10\% of the sentences for testing from the VCTK dataset, following the setting in Blow~\citep{serra2019blow}. The rest of the data are used for training in a non-parallel scheme. For evaluation, we select voice pairs from the testing set, in which each pair of voices have the same content but come from different speakers. In each testing pair, we conduct transfer from one voice to the other voice's speaking style, and then we compare the transferred voice and the other voice as evaluating the model performance.
%
 We test our model  with four competitive  baselines: Blow~\citep{serra2019blow}\footnote{For Blow model, we use the official implementation available on Github (https://github.com/joansj/blow). We report the best result we can obtain here, under training for 100 epochs (11.75 GPU days on Nvidia V100).}, AUTOVC~\citep{qian2019autovc}, AdaIN-VC~\citep{chou2019one} and StarGAN-VC~\citep{kameoka2018stargan}. The detailed implementation of these four methods are provided in the Supplementary Material.
%
Table~\ref{tab:voiceconv_nonparallel} shows the subjective and objective evaluation for the many-to-many VST task. Both methods with the encoder-decoder framework, AdaIN-VC and AUTOVC, have competitive results. However, our \ourModel outperforms the other baselines on all metrics, demonstrating that the style-content disentanglement in the latent space improves the performance of the encoder-decoder framework.

\begin{table}[t]

\caption{Zero-Shot VST evaluation results. For all metrics except Distance, higher is better.}
\label{tab:voiceconv_zeroshot}
\centering
\setlength{\tabcolsep}{10pt}
\begin{tabular}{lcccc}
\toprule
Metric & \multicolumn{2}{c}{Objective} & \multicolumn{2}{c}{Subjective} \\
    \cmidrule(r){2-3} \cmidrule(r){4-5}
    & Distance   & Verification[\%]  &  Naturalness [1--5] & Similarity [\%] \\
\midrule
AdaIN-VC         & 6.37 & 76.7 & 2.67 & 68.4   \\
AUTOVC  & 6.68 & 60.0 & 2.19 & 58.6\\
\ourModel (Ours) & \textbf{6.31}   & \textbf{81.1} & \textbf{3.33} & \textbf{76.4}\\
\bottomrule
\end{tabular}

\end{table}

For the zero-shot VST task,  we use the same train-validation dataset split as in the many-to-many setup. The testing data are selected to guarantee that no test speaker has any utterance in the training set.
%
%
%
We compare our model with the only two baselines, AUTOVC~\citep{qian2019autovc} and AdaIN-VC~\citep{chou2019one}, that are able to handle voice  transfer for newly-coming unseen speakers.
We used the same implementations of AUTOVC and AdaIN-VC as in the many-to-many VST. 
The evaluation results of zero-shot VST are shown in Table \ref{tab:voiceconv_zeroshot}, among the two baselines AdaIN-VC performs better than AUTOVC overall.
Our \ourModel outperforms both baseline methods, on all metrics. All three tested models have encoder-decoder transfer frameworks, the superior performance of  \ourModel indicates the effectiveness of our  disentangled representation learning scheme. More evaluation details are provided in the supplementary material.

\subsection{Disentanglement Discussion}

Besides the performance comparison with other VST baselines, we  demonstrate the capability of our information-theoretic disentangled representation learning scheme. First,  we conduct a t-SNE~\citep{maaten2008visualizing} visualization of the latent spaces of the \ourModel model. As shown in the left of Figure~\ref{fig:tsne},  style embeddings from the same speaker are well clustered, and style embeddings from different speakers separate in a clean manner. The clear pattern indicates our style encoder $E_s$ can verify the speakers' identity from the voice samples. In contrast, the content embeddings (in the right of Figure~\ref{fig:tsne}) are indistinguishable for different speakers, which means our content encoder $E_c$ successfully eliminates speaker-related information and extracts rich semantic content from the data.

\begin{figure}[t]
    \centering
    \includegraphics[width=0.75\textwidth]{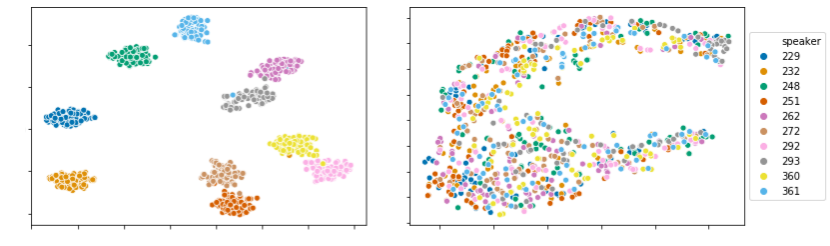}
    \vspace{-3mm}
    \caption{Left: t-SNE visualization for speaker embeddings. Right: t-SNE visualization for content embedding. The embeddings are extracted  from the voice samples of  10 different speakers.}
    \label{fig:tsne}
    \vspace{-3mm}
\end{figure}
\begin{wraptable}{r}{4.5cm}
 \vspace{-0.3mm}
\caption{Speaker identity prediction accuracy on content embedding.}
\label{tab:class_acc}
\centering
\setlength{\tabcolsep}{3.3pt}
\begin{tabular}{lc}
\toprule
     &  Accuracy[\%]  \\
\midrule
AUTOVC     & 9.5\\
AdaIN-VC    & 19.0 \\
\ourModel  & \textbf{8.1} \\
\bottomrule
\end{tabular}
\end{wraptable}

We also empirically evaluate the disentanglement, by predicting the speakers' identity based on only the content embeddings. A two-layer fully-connected network is trained on the testing set with a content embedding as input, and the corresponding  speaker identity as output. We compare our \ourModel with AUTOVC and AdaIN-VC, which also output content embeddings. The classification results are shown in Table~\ref{tab:class_acc}.  Our \ourModel reaches the lowest classification accuracy, indicating that the content embeddings learned by \ourModel contains the least speaker-related information. Therefore, our \ourModel learns disentangled representations with high quality compared with other baselines.

\subsection{Ablation Study}


\begin{wraptable}{r}{5.5cm}
\vspace{-4.7mm}
\caption{Ablation study with different training losses. Performance is measured by objective metrics. }
\label{tab:ablation}
\centering
\setlength{\tabcolsep}{3pt}
\resizebox{5.4cm}{!}{
\begin{tabular}{lcccc}
\toprule
     &  Distance   & Verification[\%] \\
\midrule
Without $\hat{\MI}_1$       & 9.81 & 11.1  \\
Without $\hat{\MI}_3$       & 6.73  &  89.4 \\
\ourModel & \textbf{5.66}        & \textbf{92.2}\\
\bottomrule
\end{tabular}}
\end{wraptable}

Moreover, we have considered an ablation study that addresses performance effects from different learning losses in \eqref{eq:final_obj}, with results shown in Table~\ref{tab:ablation}. We compare our model with two models trained by part of the loss function in \eqref{eq:final_obj}, while keeping the other training setups unchanged, including the model structure. From the results, when the model is trained without the style encoder loss term $\hat{\MI}_1$, a transferred voice still is generated, but with a large distance to the ground truth. The verification accuracy also significantly decreases with no speaker-related information utilized. When the disentangling term $\hat{\MI}_3$ is removed, the model still reaches competitive performance, because the style encoder $E_s$ and decoder $D$ are well trained by $\hat{\MI}_1$ and $\hat{\MI}_2$. However, when adding term $\hat{\MI}_3$, we disentangle the style and content spaces, and improve the transfer quality with higher verification accuracy and less distortion. The performance without term $\hat{\MI}_2$ is not reported, because the model cannot even generate fluent speech without the reconstruction loss.

\vspace{-2.mm}
\section{Conclusions}
\vspace{-2.mm}
We have improved the encoder-decoder voice style transfer framework by disentangled latent representation learning. To effectively induce the style and content information of speech into independent embedding latent spaces, we minimize a sample-based mutual information upper bound between style and content embeddings.  The disentanglement of the two embedding spaces ensures the voice transfer accuracy without information revealed from each other.
We have also derived two new multi-group mutual information lower bounds, which are maximized during training to enhance the representativeness of the latent embeddings. On the real-world VCTK dataset, our model outperforms previous works under both many-to-many and zero-shot voice style transfer setups.  Our model can be naturally extended to other style transfer tasks modeling time-evolving sequences, \textit{e.g.}, video and music style transfer.  Moreover, our general multi-group mutual information lower bounds have broader potential applications in other representation learning tasks.

\section*{Acknowledgements} This research was supported in part by the DOE, NSF and ONR.
\bibliography{iclr2021_conference}
\bibliographystyle{iclr2021_conference}

\newpage
\appendix
\section{Proofs}
\begin{theorem}[Theorem 3.1]
Let $\vmu^{(-ui)}_{v} = \frac{1}{N_v} \sum_{k=1}^{N_v} \vs_{vk}$ if $u \neq v$; and $\vmu^{(-ui)}_{u} = \frac{1}{N_u-1} \sum_{j\neq i} \vs_{uj}$. Then, 
\begin{equation}
    \MI(\vu;\vs) \geq \bbE\Big[\frac{1}{N} \sum_{u=1}^M \sum_{i=1}^{N_u} \Big[ - \Vert \vs_{ui} - \vmu^{(-ui)}_{u} \Vert^2  -  \frac{{e}^{-1}}{N} \sum_{v=1}^{M} [N_{v} \exp(- \Vert \vs_{ui} - \vmu^{(-ui)}_{v} \Vert^2)] \Big]\Big]. 
\end{equation}
\end{theorem}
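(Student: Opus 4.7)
My plan is to apply the NWJ lower bound in \eqref{eq:BA_lowerbound} to the pair $(\vu,\vs)$ with a score function that is quadratic in the speaker centroids, and then to estimate the two NWJ expectations by empirical averages over the training samples, using the leave-one-out centroids $\vmu_v^{(-ui)}$ to keep the estimator well-defined.

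I would begin by choosing $f(v,\vs) = -\Vert \vs - \vmu_v \Vert^2$ with the true centroids $\vmu_v = \bbE[\vs \mid \vu = v]$. This is natural because $-\Vert \vs - \vmu_v \Vert^2$ is, up to an additive constant, the log-density of a spherical Gaussian centered at $\vmu_v$, which matches the inductive bias that style embeddings from the same speaker should concentrate around their centroid and therefore makes the NWJ bound reasonably tight. Substituting into \eqref{eq:BA_lowerbound} gives
\begin{equation*}
\MI(\vu;\vs) \;\geq\; \bbE_{p(\vu,\vs)}\!\left[-\Vert \vs - \vmu_{\vu} \Vert^2\right] \;-\; e^{-1}\,\bbE_{p(\vu)\,p(\vs)}\!\left[e^{-\Vert \vs - \vmu_{\vu} \Vert^2}\right].
\end{equation*}

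Next, I would replace each expectation by its empirical counterpart on $\{\vs_{ui}\}$. Using the empirical joint distribution, the first expectation becomes $\tfrac{1}{N}\sum_{u,i}(-\Vert \vs_{ui} - \vmu_u \Vert^2)$. For the second, I would factor $\bbE_{p(\vu)p(\vs)}[\cdot] = \sum_v p(\vu = v)\,\bbE_\vs[\cdot]$, approximate $p(\vu = v) \approx N_v/N$, and estimate the marginal of $\vs$ by the pooled empirical distribution of $\{\vs_{ui}\}$; after relabeling indices this gives $\tfrac{1}{N^2}\sum_{u,i}\sum_v N_v\, e^{-\Vert \vs_{ui} - \vmu_v \Vert^2}$. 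Grouping both estimates by $(u,i)$ reproduces the summand in \eqref{eq:MI_s_u}, except that the true centroids $\vmu_v$ still appear in place of the leave-one-out centroids $\vmu_v^{(-ui)}$.

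The main obstacle is that the true centroids are unavailable, and naively plugging in the full-sample centroid $\tfrac{1}{N_u}\sum_i \vs_{ui}$ would re-use $\vs_{ui}$ both inside the squared norm and inside the centroid, breaking the independence required by the NWJ Monte-Carlo step. The leave-one-out construction is precisely what repairs this: for every $(u,i)$ and every $v$, the centroid $\vmu_v^{(-ui)}$ depends only on the auxiliary samples $G_{ui} = \{\vs_{vk} : (v,k) \neq (u,i)\}$, so conditional on $G_{ui}$ the embedding $\vs_{ui}$ is an independent draw from $p(\vs \mid \vu = u)$. I would therefore apply NWJ conditionally with the random score $f_{G_{ui}}(v,\vs) = -\Vert \vs - \vmu_v^{(-ui)} \Vert^2$, use $\vs_{ui}$ as the Monte-Carlo sample in both NWJ terms, average the $N$ resulting per-sample bounds over $(u,i)$, and take the outer $\bbE$ over the training data to arrive at \eqref{eq:MI_s_u}.
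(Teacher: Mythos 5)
Your proposal follows essentially the same route as the paper's proof: apply the NWJ bound with the score function $f(v,\vs) = -\Vert \vs - \vmu_v^{(-ui)}\Vert^2$, estimate the joint-expectation term with the positive pairs $(u,\vs_{ui})$, estimate the product-of-marginals term with the empirical weights $N_v/N$ over the pooled samples, and average over all $(u,i)$. Your explicit conditioning on $G_{ui}$ to justify treating the sample-dependent score as fixed is a point the paper's proof glosses over (it simply substitutes $\vmu_v^{(-ui)}$ into the score without comment), so your write-up is, if anything, slightly more careful on that step.
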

\begin{proof}[Proof of Theorem~\ref{thm:MI_u_s}]
By the condition in the Theorem, we have the given sample pairs $\{(u, \vs_{ui})\}_{1\leq u \leq M, 1\leq i \leq N_u}$.
Not that  each pair of speaker identity  and style embedding, $(u,\vs_{ui})$, can be regarded as a sample from the joint distribution $p(\vu, \vs)$. To clearify the proof, we change the notation of random variables $\vu$ and  $\vs$ to  $\mU$ and $\mS$, which are distinct to samples $\{(u, \vs_{ui})\} \sim p(\mU, {\mS})$.

For a sample pair $(u, \vs_{ui})$, by the NWJ lower bound, we have
\begin{align}\label{seq:nwj_s_u}
   \MI(\mU; \mS) \geq & \bbE_{p(\mU, \mS)} [f(\mU,\mS)] - e^{-1}\bbE_{ p(\mU)p(\mS)} [e^{f(\mU,\mS)}] \nonumber\\
   = & \bbE_{p(\mS)} \Big[\bbE_{p(\mU| \mS)} [f(\mU,\mS)] - e^{-1}\bbE_{ p(\mU)} [e^{f(\mU,\mS)}]\Big] \nonumber\\
=&   \bbE_{\vs_{ui}\sim p(\mS)} \Big[\bbE_{p(\mU|\mS =\vs_{ui})}[f(\mU, \mS = \vs_{ui})] - e^{-1}\bbE_{p(\mU)} [e^{f(\mU,\mS = \vs_{ui})}]\Big], 
\end{align}
with a score function $f(\mU, \mS)$.
Given $\mS = \vs_{ui}$, $\bbE_{p(\mU|\mS =\vs_{ui})}[f(\mU, \mS = \vs_{ui})]$ has an unbiased estimation $f(\mU = u, \mS = \vs_{ui})$;  $\bbE_{p(\mU)} [e^{f(\mU,\mS = \vs_{ui})}]$ has an unbiased estimation by taking average of all possible values $v\sim p(\mU)$ in samples  $\{(u, \vs_{ui})\}$, 
\begin{equation}
\bbE_{p(\mU)} [e^{f(\mU,\mS = \vs_{ui})}] = \bbE\Big[\sum_{v=1}^M \frac{N_v}{N} e^{f( \mU = v, \mS = \vs_{ui})}\Big].
\end{equation}
With the two estimations, \eqref{seq:nwj_s_u} becomes
\begin{equation}\label{seq:nwj_unbias}
    \MI(\mU; \mS) \geq \bbE\Big[f(u, \vs_{ui}) - e^{-1} \sum_{v=1}^M \frac{N_v}{N} e^{f(v, \vs_{ui})}\Big].
\end{equation}

Specifically, we select score function $f(\mU = v, \mS = \vs) = -\Vert \vs - \vmu_{v}^{(-ui)} \Vert^2$, then \eqref{seq:nwj_unbias} becomes
\begin{equation}\label{seq:nwj_s_u_v}
       \MI(\mU;\mS) \geq \bbE[\hat{\MI}_{ui}]  = \bbE\Big[  - \Vert \vs_{ui} - \vmu_{u}^{-(ui)} \Vert^2 - e^{-1} \sum_{v=1}^{M} \frac{N_{v}}{N} e^{- \Vert \vs_{ui} - \vmu^{(-ui)}_{v} \Vert^2} \Big].
\end{equation}
Since the selection of index $ui$ is arbitrary, we take an average on all $\hat{\MI}_{ui}$, 
\begin{align}
    \MI(\mU; \mS) \geq& \frac{1}{N} \sum_{u=1}^M \sum_{i=1}^{N_u} \bbE_{(u, \vs_{ui})\sim p(\mU, \mS)} [\hat{\MI}_{ui}] = \bbE [\frac{1}{N} \sum_{u=1}^M \sum_{i=1}^{N_u} \hat{\MI}_{ui}] \nonumber\\ =&\bbE\Big[\frac{1}{N} \sum_{u=1}^M \sum_{i=1}^{N_u} \Big[ - \Vert \vs_{ui} - \vmu^{(-ui)}_{u} \Vert^2  -  \frac{{e}^{-1}}{N} \sum_{v=1}^{M} [N_{v} \exp(- \Vert \vs_{ui} - \vmu^{(-ui)}_{v} \Vert^2)] \Big]\Big],
\end{align}
where the right-hand side of equation~\eqref{eq:MI_s_u} is derived.
\end{proof}

\begin{theorem}[Theorem 3.2]
 Assume that given $\vs = \vs_u$, samples $\{(\vx_{ui}, \vc_{ui})\}_{i=1}^{N_u}$ are observed. With a variational distribution $q_\phi(\vx| \vs,\vc)$, we have  $\MI(\vx; \vc | \vs) \geq \bbE[\hat{\MI}]$, where
\begin{equation}
\hat{\MI} = \frac{1}{N}\sum_{u=1}^M \sum_{i=1}^{N_u} \Big[ \log q_\phi(\vx_{ui}| \vc_{ui}, \vs_u) - \log(\frac{1}{N_u} \sum_{j=1}^{N_u} q_\phi(\vx_{uj}| \vc_{ui}, \vs_u)) \Big].
\end{equation}
\end{theorem}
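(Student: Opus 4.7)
The statement is a conditional, per-group variant of the InfoNCE lower bound (equation~\eqref{eq:NCE}), so the natural route is to (i) decompose $\MI(\vx;\vc\mid\vs)$ into a mixture of conditional MIs indexed by the speaker style $\vs_u$, (ii) apply the standard InfoNCE bound to each per-speaker conditional distribution with a cleverly chosen score function built from $q_\phi$, and (iii) aggregate across speakers so the per-group averages $\frac{1}{N_u}\sum_{i=1}^{N_u}$ telescope into the global average $\frac{1}{N}\sum_u\sum_i$ via the empirical mixture weights $N_u/N$.

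First I would fix $\vs=\vs_u$ and observe that the $N_u$ pairs $\{(\vx_{ui},\vc_{ui})\}_{i=1}^{N_u}$ are i.i.d.\ draws from the conditional joint $p(\vx,\vc\mid\vs_u)$. Applying equation~\eqref{eq:NCE} to this conditional distribution with any score function $f(\vx,\vc)$ yields
\begin{equation*}
\MI(\vx;\vc\mid\vs_u) \;\geq\; \bbE\Big[\frac{1}{N_u}\sum_{i=1}^{N_u}\log\frac{e^{f(\vx_{ui},\vc_{ui})}}{\frac{1}{N_u}\sum_{j=1}^{N_u} e^{f(\vx_{uj},\vc_{ui})}}\Big].
\end{equation*}
The key substitution is to take $e^{f(\vx,\vc)}=q_\phi(\vx\mid\vc,\vs_u)$, i.e., let the score be the log variational conditional density. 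This is admissible because the InfoNCE derivation requires only positivity of $e^f$, not normalization; any positive kernel gives a valid bound, and a normalized density is merely a particular choice. Substituting produces exactly the per-speaker inner summand appearing in the target expression.

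Next I would marginalize over $\vs$. Writing $\MI(\vx;\vc\mid\vs)=\bbE_{p(\vs)}[\MI(\vx;\vc\mid\vs=\cdot)]$ and using the empirical speaker distribution $p(\vs_u)=N_u/N$ (consistent with how the $N$ total samples are drawn in the setup of Section~\ref{sec:objective}), we get
\begin{equation*}
\MI(\vx;\vc\mid\vs)\;\geq\;\sum_{u=1}^{M}\frac{N_u}{N}\cdot\bbE\Big[\frac{1}{N_u}\sum_{i=1}^{N_u}\Big[\log q_\phi(\vx_{ui}\mid\vc_{ui},\vs_u)-\log\Big(\frac{1}{N_u}\sum_{j=1}^{N_u} q_\phi(\vx_{uj}\mid\vc_{ui},\vs_u)\Big)\Big]\Big].
\end{equation*}
The weights $N_u/N$ cancel the inner $1/N_u$ normalizers, so the double sum collapses to $\frac{1}{N}\sum_u\sum_i$, exactly matching $\bbE[\hat{\MI}]$ in the theorem statement.

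The only subtle step is justifying the InfoNCE lower bound with an unnormalized positive score, which is the main obstacle to get right. I would either cite the standard derivation (a KL-divergence argument comparing the optimal NCE classifier to the one induced by $f$), or sketch it directly: the bound follows by writing $\log N_u-\bbE[\log(\sum_j e^{f(\vx_{uj},\vc_{ui})}/e^{f(\vx_{ui},\vc_{ui})})]$ and applying Jensen's inequality together with the fact that the density-ratio estimator is optimized when $e^f\propto p(\vx\mid\vc,\vs_u)$, with any other choice (in particular our $q_\phi$) producing a looser but still valid bound. Beyond this, the rest of the argument is bookkeeping on indices and expectations.
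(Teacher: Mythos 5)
Your proposal matches the paper's proof essentially step for step: condition on $\vs=\vs_u$, apply the InfoNCE bound \eqref{eq:NCE} to the per-speaker samples with the score function $f(\vx,\vc)=\log q_\phi(\vx\mid\vc,\vs_u)$, and then average over speakers (the paper phrases your $N_u/N$-weighted aggregation simply as ``taking expectation of $\vs$ on both sides,'' which amounts to the same bookkeeping). Your added remark that InfoNCE only needs a positive score rather than a normalized density is a fair clarification but not a different route, so the two arguments coincide.
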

\begin{proof}[Proof of Theorem~\ref{thm:conditional-nce} ]
Given $\vs = \vs_{u}$, we observe sample pair $\{\vx_{ui}, \vc_{ui}\}_{i=1}^{N_u}$. By the InfoNCE lower bound~\citep{oord2018representation}, with a score function $f$, we have
\begin{equation}
    \MI(\vx; \vc| \vs = \vs_{u}) \geq \bbE\Big[\frac{1}{N_u} \sum_{i=1}^{N_u} \Big[f(\vx_{ui}, \vc_{ui}) - \log\left(\frac{1}{N_{u}} \sum_{j=1}^{N_u} e^{f(\vx_{uj}, \vc_{ui})}\right)\Big]\Big].
\end{equation}
We select $f(\vx, \vc) = \log q_\phi(\vx| \vc, \vs = \vs_{u})$, then 
\begin{equation}
    \MI(\vx; \vc| \vs = \vs_{u}) \geq \bbE\Big[\frac{1}{N_u} \sum_{i=1}^{N_u} \Big[\log q_\phi(\vx_{ui}| \vc_{ui}, \vs_u) - \log\left(\frac{1}{N_{u}} \sum_{j=1}^{N_u} q_\phi (\vx_{uj}|\vc_{ui}, \vs_u)\right)\Big]\Big].
\end{equation}
Taking expectation of $\vs$ on both sides, we derive
\begin{equation}
    \MI(\vx; \vc| \vs) \geq \bbE\Big[\frac{1}{N} \sum_{u=1}^M \sum_{i=1}^{N_u} \Big[\log q_\phi(\vx_{ui}| \vc_{ui}, \vs_u) - \log\left(\frac{1}{N_{u}} \sum_{j=1}^{N_u} q_\phi (\vx_{uj}|\vc_{ui}, \vs_u)\right)\Big]\Big]. 
\end{equation}
\end{proof}

\begin{theorem}[Theorem 3.3] 
If $p(\vs|\vc)$ provides the conditional distribution between variables $\vs$ and $\vc$, then
\begin{equation}
  \MI(\vs; \vc) \leq \bbE \Big[ \frac{1}{N} \sum_{u = 1}^M \sum_{i=1}^{N_u} \Big[\log p(\vs_{ui} | \vc_{ui}) - \frac{1}{N} \sum_{v=1}^{M} \sum_{j = 1}^{N_v} \log p(\vs_{ui}|\vc_{vj}) \Big]\Big].
\end{equation}
\end{theorem}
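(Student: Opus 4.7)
The plan is to apply equation~\eqref{eq:mi-upper-club}, the sample-based CLUB upper bound already restated in Section~\ref{sec:preliminary}, directly to the samples at hand. Pooling the voices across all speakers gives a total of $N = \sum_{u=1}^M N_u$ joint draws $\{(\vs_{ui}, \vc_{ui})\}$ from $p(\vs, \vc)$. Replacing the single index $i$ in~\eqref{eq:mi-upper-club} by the compound index $(u,i)$, and likewise $j$ by $(v,j)$, and identifying $(\vx,\vy)\leftrightarrow(\vs,\vc)$, the outer sum $\sum_{i=1}^N$ becomes $\sum_{u=1}^M\sum_{i=1}^{N_u}$ and the inner average $\frac{1}{N}\sum_{j=1}^N$ becomes $\frac{1}{N}\sum_{v=1}^M\sum_{j=1}^{N_v}$, matching the structural form of the right-hand side of~\eqref{eq:mi-upper-bound}.

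The one genuinely non-cosmetic step is to reconcile the order of the arguments in $\log p(\cdot\,|\,\cdot)$. Equation~\eqref{eq:mi-upper-club} writes the inner log-probability as $\log p(\vx_j|\vy_i)$, where the conditioned variable varies across negative samples, whereas the theorem uses $\log p(\vs_{ui}|\vc_{vj})$, where the conditioning variable varies. I would handle this by verifying that the two sample forms have the same expectation: partition the inner double sum into the $N$ diagonal terms (those with $(u,i)=(v,j)$, for which $(\vs_{ui},\vc_{vj})$ is a joint sample contributing $\bbE_{p(\vs,\vc)}[\log p(\vs|\vc)]$) and the $N^2-N$ off-diagonal terms (in which $\vs_{ui}$ and $\vc_{vj}$ are drawn independently, contributing $\bbE_{p(\vs)p(\vc)}[\log p(\vs|\vc)]$). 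The identical decomposition arises for the alternate ordering $\log p(\vs_{vj}|\vc_{ui})$, so the two sample estimators agree in expectation.

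It then remains to invoke the CLUB inequality in its population form, namely $\bbE_{p(\vs,\vc)}[\log p(\vs|\vc)] - \bbE_{p(\vs)p(\vc)}[\log p(\vs|\vc)] \geq \MI(\vs;\vc)$. This follows in one line from $\MI(\vs;\vc)=\bbE_{p(\vs,\vc)}[\log p(\vs|\vc)]-\bbE_{p(\vs)}[\log p(\vs)]$ together with $\bbE_{p(\vs)}[\log p(\vs)] \geq \bbE_{p(\vs)p(\vc)}[\log p(\vs|\vc)]$, which is just $\mathrm{KL}(p(\vs)p(\vc)\,\|\,p(\vs,\vc))\geq 0$ rearranged. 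Substituting the sample-mean identifications from the previous paragraph then yields~\eqref{eq:mi-upper-bound}.

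The main (mild) obstacle is thus the bookkeeping around swapping the argument order inside $p(\cdot|\cdot)$, not any new inequality; once that is settled, the claim is a direct corollary of~\eqref{eq:mi-upper-club}.
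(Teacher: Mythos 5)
Your proposal is correct and is essentially the paper's own route: the appendix proof likewise reduces Theorem 3.3 to the population-level CLUB bound $\MI(\vs;\vc)\le\bbE_{p(\vs,\vc)}[\log p(\vs|\vc)]-\bbE_{p(\vs)p(\vc)}[\log p(\vs|\vc)]$ (quoted from Cheng \textit{et al.} rather than re-derived via your one-line KL argument) and then replaces the two expectations by pooled sample averages, which also sidesteps your argument-order bookkeeping, since at the population level either ordering of an independent pair estimates the same term $\bbE_{p(\vs)p(\vc)}[\log p(\vs|\vc)]$. The only caveat---shared equally by the paper's ``can be estimated by'' step and your final substitution---is that the inner average retains the $(v,j)=(u,i)$ diagonal terms, so the stated sample form has expectation $\tfrac{N-1}{N}$ times the population bound rather than the bound itself, a small bias both proofs silently ignore.
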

\begin{proof}[Proof of Theorem~\ref{thm:upper-bound}]
    By the upper bound in  Cheng \textit{et al.}~\citep{cheng2020improving},
    we have 
    \begin{equation}\label{seq:upperbound}
        \MI(\vs;\vc) \leq \bbE_{p(\vs,\vc)} [\log p(\vs| \vc)] - \bbE_{p(\vs)p(\vc)}[ \log p(\vs|\vc)].
    \end{equation}
    With embedding samples $\{\vs_{ui}, \vc_{ui}\}_{1\leq u \leq M, 1\leq i \leq N_u}$, the right-hand side of \eqref{seq:upperbound} can be estimated by 
    \begin{equation}
        \MI(\vs; \vc) \leq \bbE\Big[\frac{1}{N} \sum_{u=1}^M \sum_{i=1}^{N_u} \Big[\log p (\vs_{ui}| \vc_{ui}) - \frac{1}{N} \sum_{v=1}^M \sum_{j=1}^{N_v} \log p(\vs_{ui}| \vc_{vj})\Big]\Big].
    \end{equation}
\end{proof}

\paragraph{Discussion on variational approximation}
As mentioned in Section~\ref{sec:upper_bound}, we approximate $p(\vs| \vc)$ with a variational distribution $q_\theta(\vs|\vc)$ in equation~\eqref{thm:approx}, since the closed form of $p(\vs| \vc)$ is unknown. We claim that with $q_\theta(\vs| \vc)$ as a good approximation of $p(\vs| \vc)$, equation~\eqref{thm:approx} remains a MI upper bound. We calculate the difference between $\MI(\vs; \vc)$ and the approximated version of \eqref{seq:upperbound}:
\begin{align*}
{\Delta} :=& \MI(\vs;\vc) - [\bbE_{p(\vs,\vc)} [\log q_\theta(\vs| \vc)] - \bbE_{p(\vs)p(\vc)}[ \log q_\theta(\vs|\vc)]] \\
    = &  \bbE_{p(\vs,\vc)} [\log p (\vs|\vc) - \log p(\vs)]  - \bbE_{p(\vs,\vc)}[\log q_\theta (\vs| \vc)]   + \bbE_{p(\vs) p(\vc)} \Big[\log {q_\theta(\vs| \vc)}\Big] \\
    = &\Big[\bbE_{p(\vs, \vc)}[\log p(\vs| \vc)] - \bbE_{p(\vs, \vc)}[\log q_\theta(\vs| \vc)]\Big] - \Big[\bbE_{p(\vs)}[\log p(\vs)] - \bbE_{p(\vs) p(\vc)} [\log q_\theta(\vs| \vc)]\Big] \\
    =&  \bbE_{p(\vs, \vc)} [\log \frac{p(\vs| \vc)}{q_\theta (\vs| \vc)}] -\bbE_{p(\vs)p(\vc)}[\log \frac{p(\vs)}{q_\theta (\vs| \vc)}]\\
    = &  \text{KL}(p(\vs| \vc) \Vert q_\theta(\vs|\vc)) -\text{KL}(p(\vs) \Vert q_\theta (\vs|\vc)).
\end{align*}
When $q_\theta(\vs|\vc)$ is a good approximation to $p(\vs| \vc)$, the divergence $\text{KL}(p(\vs| \vc) \Vert q_\theta(\vs|\vc)) $ can be smaller than $\text{KL}(p(\vs) \Vert q_\theta (\vs|\vc))$. Then $\Delta$ remains negative, which indicates $ [\bbE_{p(\vs,\vc)} [\log q_\theta(\vs| \vc)] - \bbE_{p(\vs)p(\vc)}[ \log q_\theta(\vs|\vc)]]$ still be an MI upper bound.
 \section{Experiements}
  \begin{wraptable}{r}{5.5cm}
\vspace{-4.7mm}
\caption{Ablation study with bottleneck design for zero-shot VST. We set sampling rate as 4.  Performance is measured by objective metrics.}
\label{tab:ablation_sr4}
\centering
\setlength{\tabcolsep}{3pt}
\resizebox{5.4cm}{!}{
\begin{tabular}{lcccc}
\toprule
     &  Distance   & Verification[\%] \\
\midrule
AUTOVC       & 6.59 & 41  \\
\ourModel & \textbf{6.24}        & \textbf{80}\\
\bottomrule
\end{tabular}}
\vspace{-3.5mm}
\end{wraptable}
 \paragraph{More ablation study on bottleneck design}
 We kept the same bottleneck design as AUTOVC to have a fair comparison for the effectiveness of the proposed disentangled learning scheme. To further provide evidence of effectiveness of \ourModel, we also conducted an ablation study in which the bottleneck is widened in Table \ref{tab:ablation_sr4}. Specifically, we use set sampling rate as 4 and conduct experiments under the zero-shot setup. The results demonstrated that the bottleneck design has little impact on the disentanglement ability of the proposed model.
 \paragraph{More ablation study on visualization}
 We further provide t-SNE visualization for content embedding from \ourModel and AUTOVC in Figure \ref{fig:idevc_content_tsne} and Figure \ref{fig:autovc_content_tsne} under same hyperparameter setups. Comparing between the two t-SNE plottings, the content embeddings generated with \ourModel are more indistinguishable for different speakers than the ones from AUTOVC, which proves that the proposed model has stronger ability to eliminate speaker-related information in content embedding.
  \begin{figure}[h!]
    \centering
    \includegraphics[width=0.5\textwidth]{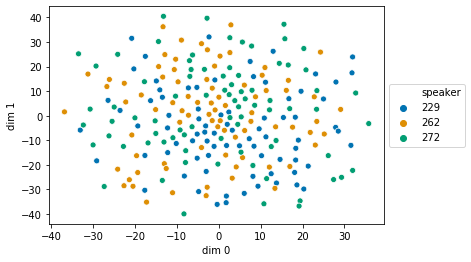}
    \caption{t-SNE visualization for content embedding from \ourModel. The embeddings are extracted from the voice samples of 3 different speakers.}
    \label{fig:idevc_content_tsne}
\end{figure}

  \begin{figure}[h!]
    \centering
    \includegraphics[width=0.5\textwidth]{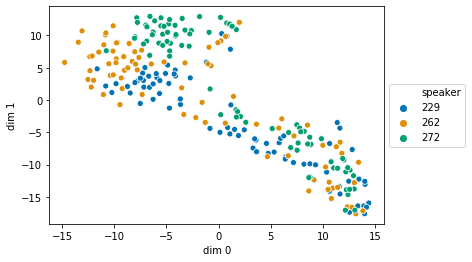}
    \caption{t-SNE visualization for content embedding from AUTOVC. The embeddings are extracted from the voice samples of 3 different speakers.}
    \label{fig:autovc_content_tsne}
\end{figure}
 \paragraph{Speaker encoder pretraining}
 Our speaker encoder is pretrained with GE2E loss on a combination of VoxCeleb1~\citep{nagrani2017voxceleb} and Librispeech~\citep{panayotov2015librispeech} datasets, in total of 3549 speakers. 
 
 \paragraph{Implementation details}
In our experiments, we use official implementation of AdaIN-VC\footnote{https://github.com/jjery2243542/adaptive\_voice\_conversion}, AUTOVC\footnote{https://github.com/auspicious3000/autovc} and Blow\footnote{https://github.com/joansj/blow}. Specifically, same pretrained speaker encoder is used in AUTOVC~\citep{qian2019autovc} and our model for fair comparison. Blow model is trained with 100 epochs and suggested hyperparameters, the training takes over 10 GPU days on Nvidia V100 in comparison with 1 GPU day on Nvidia Xp for our model.
For StarGAN-VC, we use an open source implementation\footnote{https://github.com/liusongxiang/StarGAN-Voice-Conversion}, which achieves better performance according to multiple previous works~\citep{qian2019autovc, serra2019blow}.
All above models are trained on all 109 speakers in VCTK dataset, and same splits are used for testing and validation. 

For our model, we use loss on validation set to conduct grid search on hyperparameter $\beta$, and we use $\beta=5$ in final experiment. The other hyperparameters are set as the same as in AUTOVC~\citep{qian2019autovc}.
 
\paragraph{Sample speeches}
We also provide several sample conversed speeches on https://idevc.github.io/. 
 
\section{Evaluation Details}
\paragraph{Verification score}
 In details, in Resemblyzer, a pre-trained speaker encoder is provided $\vs = E_{sr}(\vx)$. The voice profile for each speaker~$u$ is first computed as $\vs_u = \frac{1}{N_t}\sum_{n}^{N_t}E_{sr}(\vx_{un})$, in which $N_t$ represents the number of speeches each speaker has in testing set, $\vx_{un}$ represents the $n$-th speech of speaker~$u$ in testing set. For each speech conversed from $i$-th speech of speaker~$u$ to $j$-th speech of speaker~$v$, represented as $\hat{\vx}_{u_i\to v_j}$, the speaker embedding is computed with Resemblyzer: $\hat{\vs}_{u_i \to v_j} = E_{sr}(\hat{\vx}_{u_i\to v_j})$. Dot product is used to compute similarity between the speaker embedding and the voice profile. If among all speakers in testing set, the speaker embedding of the conversed speech has highest similarity score with the target speaker's profile $\vs_v$, we view it as a success conversion. The portion of success conversion among all conversion trials is reported as verification score.
 
 \paragraph{Details in evaluation for zero-shot VST}
  Based on setting in AdaIN-VC~\citep{chou2019one}, subjective evaluation is performed on converted voice between male to male, male to female, female to male and female to female speakers. To reduce variance, 3 speakers are selected for each gender, thus, in total 36 pairs of speakers. The speakers of these pairs were unseen during training. Following the setting in AdaIN-VC~\citep{chou2019one}, the converted result of each pair was transfered from our proposed model with only one source utterance and one target utterance.
  
 \textbf{Dynamic Time Wrapping } 
When evaluating the voices generated by neural networks from latent embeddings, the mismatching problem in time alignment occurs due to the time shift and different speaking speeds in the generation. Important voice patterns may be neglected when directly calculating the Euclidean distance between the generated voice and the ground-truth. One effective solution to the sequential time-alignment problem is the Dynamic Time Wrapping (DTW) algorithm~\citep{berndt1994using}, which has been widely applied in speech recognition and matching~\citep{muda2010voice,chapaneri2012spoken,dhingra2013isolated}.
The DTW algorithm seeks the optimal matching path $\mP^* \in \calP(T,S)$ that minimizes the sequential matching cost between two time series  $\vx = (x^{1}, x^{2}, \dots, x^{T})$ and $\vy = (y^1, y^2, \dots, y^S)$ (\textit{e.g.}, the purple path in the right of Figure~\ref{fig:dtw_explain}). A consecutive  matching path  $\mP \in \calP(T,S)$ denotes  a sequence of index pairs $\mP = (\vp^1, \vp^2, \dots, \vp^L)$, in which each pair $\vp^l = (t_l, s_l)$ matches $x^{t_l}$ and $y^{s_l}$ following the time order (\textit{i.e.}, $\vp^1 = (1,1)$, $\vp^L = (T,S)$,  $0\leq t_{l+1} - t_{l} \leq 1 $, and $0\leq s_{l+1} - s_{l} \leq 1 $).
The DTW score is 
$\calS_{\text{DTW}}(\vx, \vy) = \min_{\mP\in \calP(T,S)} \sum_{l=1}^L d(x^{t_l}, y^{s_l})$, 
where $d(\cdot, \cdot)$ is a ground distance measuring the dissimilarity of any two points in time series. The optimization needed for calculating the DTW score can be solved efficiently by dynamic programming.
\begin{figure}[t]
    \centering
    \includegraphics[width=0.98\textwidth]{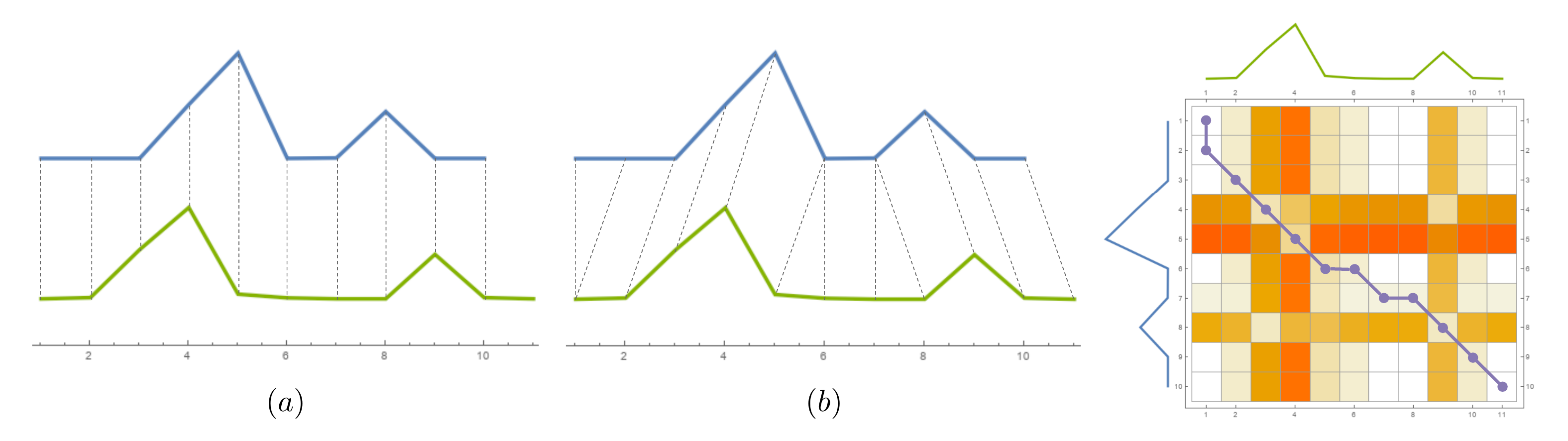}
    \caption{Left: $(a)$ The naive Euclidean distance matching between two time series might miss the important voice patterns because of the time shift and different speakers' speaking speeds; $(b)$ DTW automatically find the optimal matching that captures important voice pattens. Right: DTW converts the time sequence matching problem into the minimal cost path searching on the distance matrix. }
    \label{fig:dtw_explain}
\end{figure}

\paragraph{Human Evaluation}
Following Wester {\em et al.}~\citep{wester2016analysis}, we use the naturalness of the speech and the similarity of the transferred speech to target identity as subjective metrics. Figure \ref{fig:human1} and Figure \ref{fig:human2} shows the contents of the two human evaluation webpage layouts respectively.

 \begin{figure}[h!]
    \centering
    \includegraphics[width=0.98\textwidth]{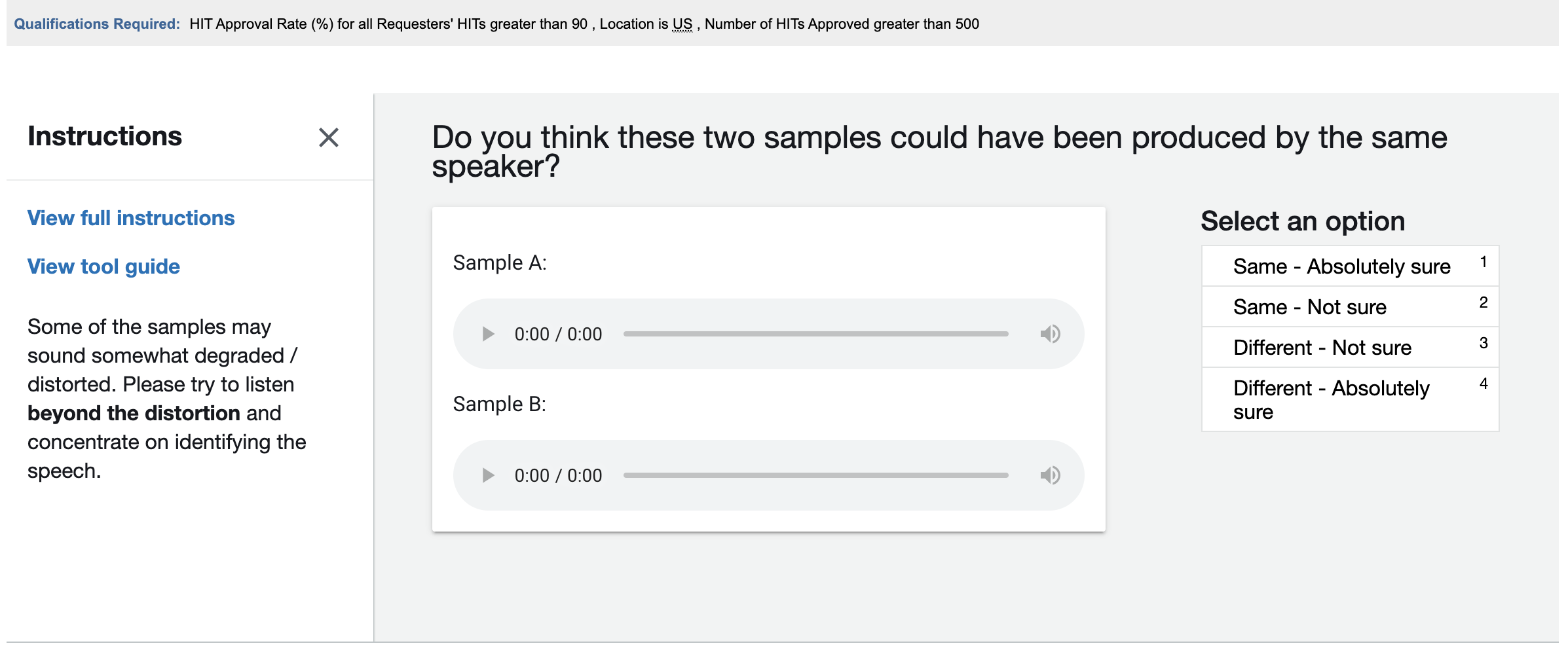}
    \caption{Human evaluation: similarity}
    \label{fig:human1}
\end{figure}
 \begin{figure}[h!]
    \centering
    \includegraphics[width=0.98\textwidth]{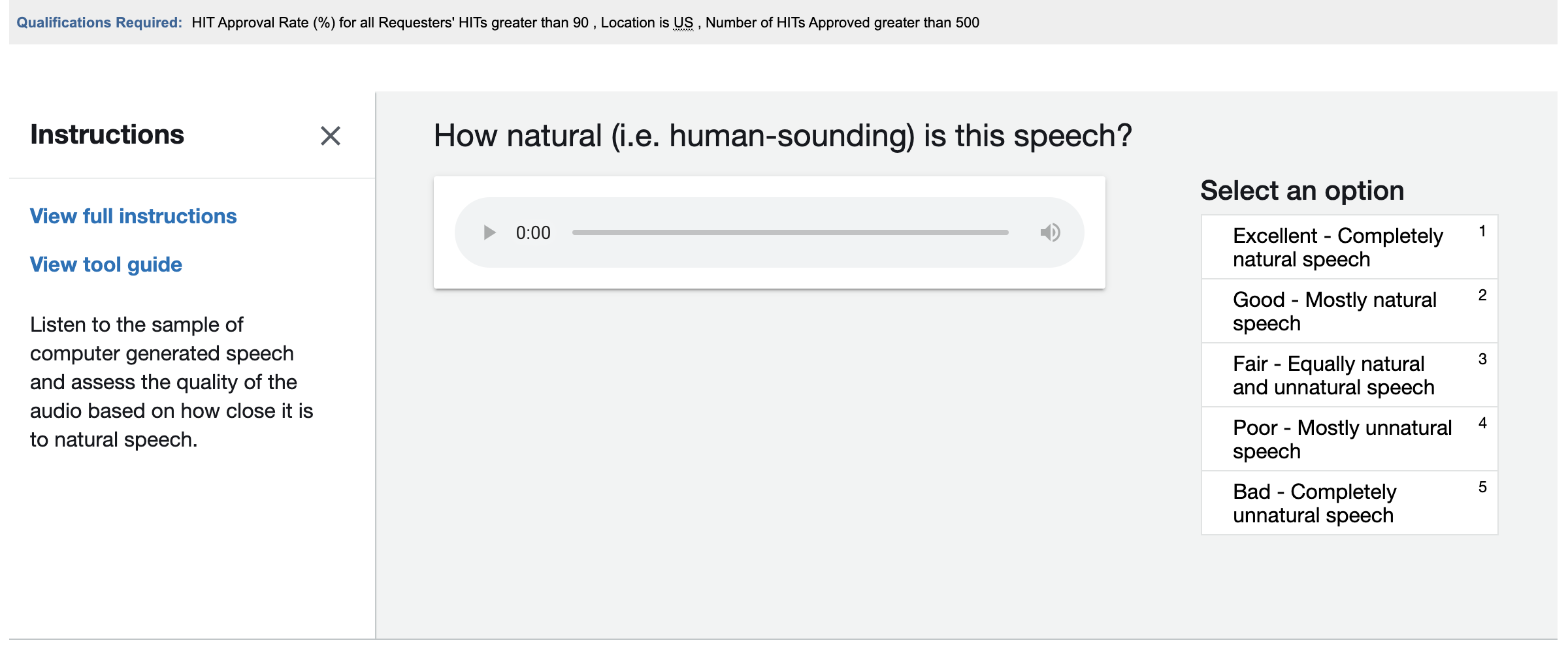}
    \caption{Human evaluation: naturalness}
    \label{fig:human2}
\end{figure}

\section{Data Processing Inequality}
\begin{theorem}
If three variables $\vx \to \vy \to \vz$ follow a markov chain, then $\MI(\vx; \vy) \geq \MI(\vx; \vz)$.
\end{theorem}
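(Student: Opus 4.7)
The plan is to use the chain rule for mutual information together with the Markov property to isolate a nonnegative gap between $\MI(\vx;\vy)$ and $\MI(\vx;\vz)$. Specifically, I would expand the joint mutual information $\MI(\vx;\vy,\vz)$ in two different ways, then use that $\vx \perp \vz \mid \vy$ (the defining property of the Markov chain $\vx \to \vy \to \vz$) to kill one of the conditional MI terms.

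Concretely, the first step is to write the two chain-rule decompositions
\begin{equation}
\MI(\vx;\vy,\vz) = \MI(\vx;\vy) + \MI(\vx;\vz \mid \vy) = \MI(\vx;\vz) + \MI(\vx;\vy \mid \vz). \nonumber
\end{equation}
The second step is to invoke the Markov assumption: because $\vx$ and $\vz$ are conditionally independent given $\vy$, the joint conditional distribution factors as $p(\vx,\vz \mid \vy) = p(\vx \mid \vy)\, p(\vz \mid \vy)$, which by the definition of conditional MI forces $\MI(\vx;\vz \mid \vy) = 0$. Substituting this into the identity above yields $\MI(\vx;\vy) = \MI(\vx;\vz) + \MI(\vx;\vy \mid \vz)$.

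The final step is to note that $\MI(\vx;\vy \mid \vz) \geq 0$, which follows because conditional mutual information can be written as an expectation of a KL divergence (nonnegative by Gibbs' inequality). Rearranging then gives $\MI(\vx;\vy) \geq \MI(\vx;\vz)$, as desired.

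There is no real obstacle here; the only subtle point is being explicit about how the Markov hypothesis $\vx \to \vy \to \vz$ translates into the conditional independence $p(\vx,\vz\mid\vy) = p(\vx\mid\vy)p(\vz\mid\vy)$ and therefore into $\MI(\vx;\vz \mid \vy) = 0$. Once that translation is recorded, the chain rule plus nonnegativity of MI finishes the argument in one line.
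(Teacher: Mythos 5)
Your proof is correct: the two chain-rule expansions of $\MI(\vx;\vy,\vz)$, the observation that the Markov property $p(\vx,\vz\mid\vy)=p(\vx\mid\vy)p(\vz\mid\vy)$ forces $\MI(\vx;\vz\mid\vy)=0$, and the nonnegativity of $\MI(\vx;\vy\mid\vz)$ together give exactly $\MI(\vx;\vy)=\MI(\vx;\vz)+\MI(\vx;\vy\mid\vz)\geq\MI(\vx;\vz)$. The paper itself states this data-processing inequality without proof, deferring to the standard reference (Cover and Thomas), and your argument is precisely that standard textbook derivation, so there is nothing to add or correct.
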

\end{document}